%
\documentclass[runningheads]{llncs}

\usepackage{graphicx}
%

\usepackage[leqno]{amsmath}
\usepackage{amssymb}

\usepackage{algorithm}
\usepackage{algorithmic}

\usepackage{xcolor}
\usepackage{url}

\newcommand{\R}{\mathbb R}

\newcommand{\stp}{\mathrm{STP}}
\newcommand{\fp}{\mathrm{STP}^{\downarrow}}


\begin{document}
\title{Exact approaches for the\\ Connected Vertex Cover problem}
%
%
\author{Manuel Aprile\orcidID{0000-0002-6805-6903} }
\authorrunning{M. Aprile}
%

\institute{Università degli studi di Padova\\
  Mathematics Department \\
  Via Trieste 63 Padova 35121, Italy\\
 \email{manuel.aprile@unipd.it}}
\maketitle              

\begin{abstract}
Given a graph $G$, the Connected Vertex Cover problem (CVC) asks to find a minimum cardinality vertex cover of $G$ that induces a connected subgraph. In this paper we describe some approaches to solve the CVC problem exactly. First, we give compact mixed-integer extended formulations for CVC: these are the first formulations proposed for this problem, and can be easily adapted to variations of the problem such as Tree Cover. Second, we describe a simple branch and bound algorithm for the CVC problem. Finally, we implement our algorithm and compare its performance against our best formulation: contrary to what usually happens for the classical Vertex Cover problem, our formulation outperforms the branch and bound algorithm. 

\keywords{Connected Vertex Cover  \and Extended formulations \and Branch and bound.}
\end{abstract}
\section{Introduction}
Given a graph $G=(V,E)$, a subset of vertices $C\subseteq V$ is a \emph{vertex cover} of $G$ if every edge of $G$ has at least one endpoint in $C$. The problem of finding a vertex cover of minimum cardinality in a graph is equivalent to finding a maximum stable set (or a maximum clique in the complement graph) and is one of the best studied problems in theoretical computer science. In this paper we study one of the most popular variants of the minimum Vertex Cover (VC) problem, where we aim at finding a minimum \emph{connected vertex cover} (CVC): i.e., we additionally require the subgraph $G[C]$ induced by $C$ to be connected. We call this the CVC problem.

 The CVC problem has applications in wireless network design, where one aims at placing relay stations on the network so that they cover all transmission links (the edges of the network) and are all connected to each other. 
 
Similarly to the VC problem, the CVC problem is NP-hard \cite{garey1977rectilinear} and admits a polynomial-time 2-approximation algorithm \cite{savage1982depth}. On the other hand, the CVC problem is NP-hard even if the input graph is restricted to be bipartite \cite{fernau2009vertex}: this is surprising as Vertex Cover is polynomially solvable for bipartite graphs, as, thanks to the famous König-Egeváry Theorem, it amounts to finding a maximum matching.

The CVC problem has received attention especially from the point of view of parameterized algorithms \cite{guo2007parameterized,molle2008enumerate} and approximation algorithms \cite{escoffier2010complexity,savage1982depth,cardinal2010connected}. 
An aspect that did not receive much attention is that of solving the CVC problem in practice: moreover, prior to this paper there were no mathematical programming formulations for the problem. Such formulations are usually easy to implement and are flexible to the addition of extra constraints to the problem, an advantage for real-world applications. 
Unlike for the CVC problem, there is a wealth of methods for solving the VC problem, the most effective being branch and bound algorithms 
(see \cite{wu2015review} for a survey), and there are many linear and non-linear formulations for VC and the related maximum clique and maximum stable set problems \cite{padberg1973facial,kleinberg1998lovasz,aprile2020extended}. 

A key feature of the CVC problem that we exploit in this paper is that its constraints can be modelled as linear constraints from two polytopes: the vertex cover polytope and the spanning tree polytope. Both are well-studied polytopes for which a large number of \emph{extended formulations} is known \cite{aprile2017extension,aprile2020extended,aprile2021smaller,fiorini2017smaller,Martin91,Wong80}: those are formulations where extra variables are used, other than the variables of the original polytope, in order to limit the number of inequalities.

In this paper we aim at partially filling the gap between VC and CVC by proposing mixed-integer extended formulations for the CVC problem.  Our main contribution is a mixed integer formulation for the CVC problem with a relatively small number of variables (linear in the number of edges of the input graph). 
The formulations we propose also lend themselves to modelling related problems as the Tree Cover problem \cite{arkin1993approximating} (see Section \ref{sec:con}). 
As an additional contribution, we also describe a simple branch and bound algorithm for CVC, by modifying a standard algorithm for the maximum stable set problem. Finally, we perform numerical experiments to compare the various approaches. In our experiments, the proposed mixed-integer formulation solves the problem much faster than the branch and bound algorithm. This is interesting since, for the general Vertex Cover problem, combinatorial algorithms usually outperform linear formulations.

The paper is organized as follows: this introduction terminates with Section \ref{sec:prelim}, which gives some basic terminology and notation; in Section \ref{sec:MIP} we give our formulations for CVC and prove their correctness; the branch and bound algorithm is described in Section \ref{sec:BB}; numerical experiments are given in Section \ref{sec:num}; finally, we conclude with some further research directions in Section \ref{sec:con}.

\subsection{Preliminaries}
\label{sec:prelim}
Throughout the paper we let $G=(V,E)$ be a connected graph. This is natural because, ignoring exceptions such as isolated vertices, only connected graphs admit connected vertex covers. A set $U\subseteq V$ is \emph{stable} if the subgraph $G[U]$ induced by $U$ does not contain any edge. Clearly, a subset $U\subseteq V$ is a vertex cover if and only if its complement $V\setminus U$ is stable. 
Hence, solving the CVC problem amounts to finding the maximum stable set $S$ such that the graph $G\setminus S$ obtained by removing $S$ is connected.
Finally, a subgraph of $G$ is a \emph{spanning tree} of $G$ if it is a tree and contains all vertices of $G$: we usually identify a spanning tree with a set of edges $F\subseteq E$.

For sets $U\subseteq A$, we denote by $\chi^U\in \{0,1\}^A$ the \emph{incidence} vector of $U$, which satisfies $\chi^U_v=1$ if and only if $v\in U$. We will use incidence vectors for subsets of vertices, edges, or arcs in directed graphs. For a vector $x\in \R^A$, we often write $x(U)$ to denote $\sum_{u\in U} x_u$.

\section{Mixed-Integer programming formulations}\label{sec:MIP}
A compact integer formulation of the Vertex Cover problem is well known: it suffices to use a variable $x_v$ for each node $v$ of our graph $G$, and ask that $x_u+x_v\geq 1$ for each edge $uv$ of $G$. On the other hand, it is not trivial to come up with a formulation for CVC, and we do not know any formulation that only uses node variables. The reason behind this difficulty is that imposing connectedness in an induced subgraph is a difficult constraint to model. Notice that a graph is connected if and only if it admits a spanning tree. Hence to model connectedness we resort to the spanning tree polytope of $G$, denoted by $\stp(G)$, defined as the convex hull of the incidence vectors of all the spanning trees in $G$. The basic idea that underlies all the formulations in this section is to add edge variables to the node variables, and to impose that such edge variables model a spanning tree in the subgraph induced by our vertex cover. We first propose the following formulation, based on the classical linear description of $\stp(G)$ given by Edmonds \cite{edmonds1971matroids}. 

\begin{align}
    \nonumber P_{\mathrm{stp}} = \Big\{x\in \{0,1\}^V \mid\:& \ \exists \ y \in [0,1]^E :\\
    &\quad x_u+x_v\geq 1 & \forall (u,v)\in E \label{constr:stp_cover}\\
    &y(E(U))\leq |U|-1 & \forall \emptyset \neq U\subseteq V \label{constr:stp_subset}\\
    &y(E)=x(V)-1\label{constr:stp_all}\\
    &y_{uv} \leq x_u, y_{uv} \leq x_v & \forall (u,v)\in E \label{constr:stp_link}
     \Big\}.
\end{align}

\begin{lemma}\label{lem:stp}
Let $G=(V,E)$ be a connected graph. Then $C\subseteq V$ is a CVC if and only if $(\chi^C, y)\in P_{\mathrm{stp}}$ for some $y\in \R^E$.
\end{lemma}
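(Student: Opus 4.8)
The plan is to prove the two implications of the biconditional separately. Both directions turn out to be elementary: Edmonds' linear description of $\stp(G)$ motivates the choice of constraints \eqref{constr:stp_subset}--\eqref{constr:stp_all}, but I do not expect to need it as a black box.

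For the forward direction, I would start from a CVC $C$ and produce an explicit witness $y$. Since $G[C]$ is connected it has a spanning tree; taking $T\subseteq E(C)$ to be its edge set and setting $y:=\chi^T$, I would then check the four groups of constraints one by one. Constraint \eqref{constr:stp_cover} holds because $C$ is a vertex cover; \eqref{constr:stp_all} holds because a tree on $|C|$ vertices has exactly $|C|-1=x(V)-1$ edges; \eqref{constr:stp_link} holds because every edge of $T$ has both endpoints in $C$; and \eqref{constr:stp_subset} holds because, for any $\emptyset\neq U\subseteq V$, the edges of $T$ with both ends in $U$ form a forest on the vertex set $U\cap C$, so there are at most $\max\{|U\cap C|-1,\,0\}\le|U|-1$ of them. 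This step is routine.

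For the backward direction, I would take $(\chi^C,y)\in P_{\mathrm{stp}}$ and argue that $C$ is a CVC. That $C$ is a vertex cover is immediate from \eqref{constr:stp_cover}, so the real content is showing $G[C]$ is connected. The key observation is that \eqref{constr:stp_link} together with $y\ge 0$ forces $y_e=0$ for every edge $e$ with an endpoint outside $C$, so $y$ is supported on $E(C)$. Then, assuming for contradiction that $G[C]$ splits into components on vertex sets $V_1,\dots,V_k$ with $k\ge 2$, I would sum constraint \eqref{constr:stp_subset} over $U=V_1,\dots,V_k$ and use that the sets $E(V_i)$ partition $E(C)$ to obtain $x(V)-1=y(E)=\sum_{i=1}^k y(E(V_i))\le\sum_{i=1}^k(|V_i|-1)=|C|-k\le|C|-2$, contradicting \eqref{constr:stp_all} since $x(V)=|C|$.

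The main (and essentially only) obstacle I anticipate is the bookkeeping in this last step: one has to be careful to first confine the support of $y$ to $E(C)$ via \eqref{constr:stp_link} before invoking the subset constraints, since otherwise edges leaving $C$ could in principle carry weight and break the counting. Conceptually, this argument just says that the restriction of $y$ to $E(C)$ satisfies Edmonds' inequalities defining $\stp(G[C])$, a polytope that is non-empty exactly when $G[C]$ is connected; making that precise is an alternative route, but the direct counting argument above seems shorter and fully self-contained.
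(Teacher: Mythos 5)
Your proof is correct. The forward direction is identical to the paper's (take a spanning tree of $G[C]$ and use its incidence vector as the witness $y$), and the backward direction begins the same way, using \eqref{constr:stp_link} and $y\ge 0$ to confine the support of $y$ to $E(C)$. Where you diverge is in how you finish: the paper observes that the restriction of $y$ to $E(C)$ then lies in the polytope cut out by Edmonds' inequalities for $G[C]$, which by Edmonds' theorem equals $\stp(G[C])$, so that polytope is nonempty and $G[C]$ must be connected. You instead derive the contradiction directly: if $G[C]$ had components $V_1,\dots,V_k$ with $k\ge 2$, summing \eqref{constr:stp_subset} over the $V_i$ and using that $y$ is supported on $E(C)=\bigcup_i E(V_i)$ gives $y(E)\le |C|-k<|C|-1$, contradicting \eqref{constr:stp_all}. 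Your route is more elementary and fully self-contained (it does not need Edmonds' characterization as a black box, only the stated inequalities), at the cost of a few extra lines of counting; the paper's version is shorter but leans on the exactness of Edmonds' description. Both are valid, and as you note yourself, your counting argument is essentially the standard proof that Edmonds' system is infeasible on a disconnected graph.
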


\begin{proof}
If $C$ is a CVC, then fix any spanning tree $F$ of $G[C]$. Then $\chi^C$ clearly satisfies Constraints \eqref{constr:stp_cover}; moreover, setting $y=\chi^F$ can be easily seen to satisfy Constraints \eqref{constr:stp_subset}, \eqref{constr:stp_all}, \eqref{constr:stp_link}.

On the other hand, assume that $(\chi^C, y)\in P_{\mathrm{stp}}$. Then $C\subseteq V$ is clearly a vertex cover. Moreover, \eqref{constr:stp_link} implies $y_e=0$ for each $e\in E\setminus E(C)$, hence the projection $y'$ of $y$ to variables $E(C)$ is in the spanning tree polytope of $G[C]$, due to constraints \eqref{constr:stp_subset},\eqref{constr:stp_all} (notice that $x(V)-1=|C|-1$). The spanning tree polytope of $G[C]$ is then non-empty, therefore $G[C]$ is connected. 
\end{proof}

The description above has an exponential number of constraints. There are well known extended formulations of size $O(n^3)$ for the spanning tree polytope of an $n$-vertex graph \cite{Wong80,Martin91}, and smaller extended formulations for special classes of graphs \cite{aprile2021smaller,fiorini2017smaller}. Therefore, we would like to turn any formulation for the spanning tree polytope into a formulation for CVC. This can be done by going through the forest polytope of $G$, $\fp(G)$, defined as the convex hull of incidence vectors of forests of $G$.
The same proof of Lemma \ref{lem:stp} shows that a correct formulation for CVC can be obtained by replacing Constraints \ref{constr:stp_subset} in $P_{\mathrm{stp}}$ with $y\in \fp(G)$. Finally, it is well-known that one can obtain a formulation of $\fp(G)$ from one of $\stp(G)$, since $\fp(G)=\{x\in [0,1]^E: \exists y\in \R^E: x\leq y, y\in \stp(G)\}$. While this approach does reduce the size of our CVC formulation from exponential to polynomial, it still yields too many extra variables to be practical. In the next section, we address this issue.






\subsection{A smaller mixed-integer formulation}
We now give a smaller formulation for the CVC problem, which makes use of a mixed-integer formulation for $\stp(G)$ with a small number of additional variables. We start by giving the formulation for $\stp(G)$, which builds on natural ideas that can be found, for instance, in \cite{miller1960integer}. Rather than spanning trees in undirected graphs, we focus on arborescences in directed graphs. Given our graph $G$, we simply bidirect each edge obtaining the directed graph $D(V,A)$. Now, fix a ``root'' vertex $r\in V$. Recall that an $r$-arborescence of $D$ is a subset of arcs $F\subseteq A$ such that, for every $v\in V\setminus \{r\}$, $F$ contains exactly one directed path from $r$ to $v$. Clearly, a description of the $r$-arborescences of $D$ gives a description of the spanning trees of $G$ by just ignoring the orientations (i.e. setting $y_{uv}=z_{uv}+z_{vu}$ for each edge $uv$). Moreover, since arborescences are rooted in $r$, we do not need arcs that point to $r$, and we simply delete them. 
Recall that $\delta^-(v)$ denotes the set of arcs of $A$ pointing to $v$.

\begin{align}
    \nonumber Q_r = \Big\{z\in \{0,1\}^A \mid\:& \ \exists \ d \in \R^V:\\
    &z(\delta^-(v))=1 & \forall v\in  V\setminus \{r\} \label{constr:arb1}\\
    &d_v\geq n\cdot(z_{uv}-1) + d_u +1 & \forall (u,v)\in A \label{constr:arb2}\\
        &d_r = 0 & \label{constr:arb3}\\
            &z(A)=|V|-1  \Big\}.\label{constr:arb4}
\end{align}

\begin{lemma}\label{lem:arb}
Let $D=(V,A)$ be a directed graph, and $r\in V$ such that $\delta^-(r)=\emptyset$. Then $F\subseteq A$ is an $r$-arborescence of $D$ if and only if $(\chi^F, d)\in Q_r$ for some $d\in \R^V$.
\end{lemma}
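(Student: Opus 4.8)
The plan is to prove both implications, showing that the constraints $\eqref{constr:arb1}$--$\eqref{constr:arb4}$ together characterise exactly the $r$-arborescences of $D$. The key auxiliary variables $d_v$ should be thought of as encoding the depth (distance from $r$) of vertex $v$ in the arborescence, and the ``big-$n$'' constraint $\eqref{constr:arb2}$ is the standard Miller–Tucker–Zemlin–style device that is vacuous when $z_{uv}=0$ and forces $d_v \geq d_u+1$ when $z_{uv}=1$.

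For the forward direction, suppose $F$ is an $r$-arborescence. Then $\eqref{constr:arb1}$ holds since every non-root vertex has a unique incoming arc on its path from $r$, and $\eqref{constr:arb4}$ holds since an arborescence on $n$ vertices has $n-1$ arcs. I would then define $d_v$ to be the length of the unique $r$-$v$ path in $F$ (with $d_r=0$), which gives $\eqref{constr:arb3}$ immediately. To check $\eqref{constr:arb2}$: if $(u,v)\in F$ then $d_v = d_u+1$ and the inequality holds with equality-slack zero; if $(u,v)\notin F$ then $z_{uv}=1$ fails, so $z_{uv}-1 = -1$ and the right-hand side is $d_u+1-n \leq d_u+1-n$, which is at most $d_v$ because all depths lie in $\{0,1,\dots,n-1\}$, so $d_v \geq 0 \geq d_u + 1 - n$. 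Hence $(\chi^F,d)\in Q_r$.

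For the reverse direction, suppose $(\chi^F,d)\in Q_r$, so $F=\{(u,v) : z_{uv}=1\}$. By $\eqref{constr:arb1}$ every vertex $v\neq r$ has exactly one incoming arc in $F$, and by $\eqref{constr:arb3}$ together with $\delta^-(r)=\emptyset$ the root has no incoming arc; combined with $|F|=n-1$ from $\eqref{constr:arb4}$ (which is in fact already implied by $\eqref{constr:arb1}$ plus $\delta^-(r)=\emptyset$), $F$ is a function assigning each non-root vertex a unique parent. It remains to rule out directed cycles: such an $F$ is an $r$-arborescence precisely when the parent structure is acyclic and every vertex reaches $r$ by iterating the parent map. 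Suppose for contradiction $F$ contained a directed cycle $v_1 \to v_2 \to \cdots \to v_k \to v_1$; then $\eqref{constr:arb2}$ applied along the cycle (each arc has $z=1$, so it reads $d_{v_{i+1}} \geq d_{v_i}+1$) gives $d_{v_1} \geq d_{v_1} + k$, a contradiction since $k\geq 1$. So the parent structure is acyclic; since it is a functional graph on $V$ with the unique source $r$, iterating the parent map from any vertex must terminate, and it can only terminate at $r$, so every vertex has a (necessarily unique, by the functional structure) directed path to $r$, i.e.\ $F$ is an $r$-arborescence.

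The main obstacle — and really the only non-routine point — is the cycle-elimination argument, i.e.\ verifying that $\eqref{constr:arb1}$ and $\eqref{constr:arb4}$ alone do not suffice (they would allow $F$ to be a union of $r$'s out-component with some vertex-disjoint directed cycles) and that $\eqref{constr:arb2}$ kills exactly these parasitic cycles via the telescoping-sum contradiction above. I would also take care to note explicitly that $\eqref{constr:arb2}$ is \emph{not} restricted to arcs of $F$: it is imposed for all $(u,v)\in A$, and the slack term $n\cdot(z_{uv}-1)$ is what makes it harmless on the arcs outside $F$, using only the a priori bound that a valid $d$ can be taken (or is forced, after a shift) to lie in a window of width less than $n$.
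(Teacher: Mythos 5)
Your proof is correct and follows essentially the same approach as the paper: the same depth-labelling of vertices for the forward direction and the same MTZ-style potential argument against cycles for the converse. The only (harmless) difference is in how acyclicity is finished: the paper excludes all cycles of the underlying undirected graph via the minimum-$d_v$ vertex of a cycle and then invokes $|F|=|V|-1$, whereas you exclude only directed cycles by telescoping and conclude by iterating the parent map given by Constraint \eqref{constr:arb1}, which also lets you correctly observe that Constraint \eqref{constr:arb4} is redundant given $\delta^-(r)=\emptyset$.
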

\begin{proof}
First, given an $r$-arborescence $F$, set $d_v$ to the length of the (unique) path from $r$ to $v$ in $F$, for each $v\in V$. It is easy to check that all constraints are satisfied by $(\chi^F,d)$.

On the other hand, let $(z,d)\in Q_r$, with $z=\chi^F$. We first show that $F$, after ignoring orientations, does not contain cycles: suppose by contradiction that $C\subseteq F$ is a cycle with vertices $v_1,\dots, v_k$, where for each $i=1,\dots,k$, $v_iv_{i+1}\in C$ or $v_{i+1}v_i\in C$ (where the sum is modulo $k$). For any $uv\in C$, we have that $d_v\geq d_u+1$ by \eqref{constr:arb2}: this implies that $C$ cannot be a directed cycle. In particular, if $v$ is the vertex of $C$ with $d_v$ minimum, then there are two arcs of $C$ pointing to $v$: but this is in contradiction with Constraint \eqref{constr:arb1}, if $v\neq r$, and with $\delta^-(r)=\emptyset$ otherwise.

Now, we have $|F|=|V|-1$ by \eqref{constr:arb4}. This, the absence of cycles, and Constraint \eqref{constr:arb1}, guarantees that $F$ is an $r$-arborescence of $D$.
\end{proof}

One could turn $Q_r$ into a formulation for the forest polytope of $G$ and obtain a formulation for the CVC problem, as described in the previous section. However, it is not clear how to do this without adding additional variables: the issue is the choice of the root $r$, which does not need to be connected to the other vertices in a forest. Instead, we are able to limit the number of variables by exploiting the fact that, for any edge $uv$ of $G$, at least one of $u, v$ has to be picked in our vertex cover. Hence, we choose a ``main'' root vertex $r$, and another root $r_1$, adjacent to $r$, that we can use as a root when $r$ is not in our vertex cover. We consider the following directed version $D(V,A)$ of our graph $G(V,E)$: fix $r,r_1\in V$ with $rr_1\in E$, turn every edge $vr\in E$ into a directed arc from $r$ to $v$, turn every edge $vr_1\in E$ with $v\neq r$ into a directed arc from $r_1$ to $v$, and bidirect each other edge. Notice that, in $D$, $\delta^-(r)=\emptyset$ and $\delta^-(r_1)=\{r\}$.
Now, consider the following formulation:

\begin{align}
    \nonumber P_{\mathrm{arb}}(r, r_1) = \Big\{x\in \{0,1\}^C: \mid\:& \ \exists \ z\in \{0,1\}^A, \ d \in \R^V:\\
    &x_u+x_v\geq 1 & \forall (u,v)\in A, \label{constr:arb_cover}\\
    &z(\delta^-(v))=x_v & \forall v\in  V\setminus \{r, r_1\} \label{constr:arb1'}\\
    &d_v\geq n\cdot(z_{uv}-1) + d_u +x_{v} & \forall (u,v)\in A \label{constr:arb2'}\\
        &d_{r} = 0 & \label{constr:arb3'}\\
       &z(A)=x(V)-1\label{constr:arb_all}\\
    &z_{uv} \leq x_u,\; z_{uv} \leq x_v & \forall (u,v)\in A \label{constr:arb_link}
     \Big\}.
\end{align}

\begin{theorem}
Let $G=(V,E)$ be a connected graph, let $r,r_1\in V$ with $(r,r_1)\in E$ and construct the directed graph $D(V,A)$ as described above. Then $C\subseteq V$ is a CVC if and only if $(\chi^C, z,d)\in P_{\mathrm{arb}}(r,r_1)$ for some $z,d$.
\end{theorem}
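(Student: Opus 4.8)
The plan is to follow the template of Lemmas~\ref{lem:stp} and~\ref{lem:arb}, paying attention to the asymmetric roles of $r$ and $r_1$ in $D$.

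For the ``if'' direction, suppose $(\chi^C,z,d)\in P_{\mathrm{arb}}(r,r_1)$. First I would observe that $C$ is a vertex cover: every edge $uv\in E$ gives rise to at least one arc in $A$, and \eqref{constr:arb_cover} forces $x_u+x_v\ge 1$ on it. Set $F=\{a\in A:z_a=1\}$; by \eqref{constr:arb_link} every endpoint of an arc of $F$ lies in $C$, and by \eqref{constr:arb_all} we have $|F|=|C|-1$. The crucial step is to show that $F$, viewed as an undirected edge set, is acyclic, by the argument of Lemma~\ref{lem:arb}: along any arc $(u,v)\in F$ constraint \eqref{constr:arb2'} gives $d_v\ge d_u+x_v$, and $x_v=1$ since $v\in C$, so $d$ strictly increases along arcs of $F$; hence an undirected cycle of $F$ cannot be a directed cycle and must contain a vertex $v$ entered by two arcs of $F$, i.e.\ $z(\delta^-(v))\ge 2$. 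This contradicts \eqref{constr:arb1'} when $v\notin\{r,r_1\}$, and otherwise contradicts the fact that $D$ has no arc into $r$ and at most one into $r_1$. A forest with $|C|-1$ edges all of whose vertices lie in $C$ is a spanning tree of $C$, so $G[C]$ is connected and $C$ is a CVC.

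For the ``only if'' direction, let $C$ be a CVC. Since $rr_1\in E$, at least one of $r,r_1$ lies in $C$. If $r\in C$ I would take a spanning tree $T$ of $G[C]$ containing the edge $rr_1$ whenever $r_1\in C$ (a single edge always extends to a spanning tree of a connected graph) and root $T$ at $r$; if $r\notin C$ then $r_1\in C$, and I root an arbitrary spanning tree $T$ of $G[C]$ at $r_1$. Orienting every edge of $T$ away from the root produces an arc set $F$, and the point to verify is that $F\subseteq A$: this follows from the construction of $D$, the only subtlety being that when $r_1$ is not the root its parent in $T$ must be $r$ — which is exactly why $rr_1$ was forced into $T$ in that case. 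Then set $z=\chi^F$, let $d_v$ be the depth of $v$ in $T$ for $v\in C$ and $d_v=0$ otherwise (so $d_r=0$ in every case), and check \eqref{constr:arb_cover}--\eqref{constr:arb_link}: \eqref{constr:arb_cover} because $C$ is a vertex cover; \eqref{constr:arb1'} and \eqref{constr:arb_all} because each non-root vertex of $T$ has a unique parent and $T$ has $|C|-1$ edges; \eqref{constr:arb_link} because $F$ uses only vertices of $C$; and \eqref{constr:arb2'} with equality along arcs of $F$ and trivially elsewhere, using $0\le d_v\le n-1$ against the big-$M$ coefficient $n$.

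I expect the main obstacle to lie in the ``only if'' direction — namely, choosing the rooted spanning tree so that orienting it away from the root yields only legal arcs of $D$. This is precisely what the asymmetric definition of $D$ around $r$ and $r_1$, together with the rule of rooting at $r_1$ exactly when $r\notin C$, is designed to accommodate. Once the tree and the $d$-values are fixed, the remaining verifications, in particular the big-$M$ inequalities \eqref{constr:arb2'} at vertices outside $C$, are routine.
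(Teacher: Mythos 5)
Your proposal is correct and follows essentially the same route as the paper: the acyclicity argument via the $d$-values and in-degree constraints for the ``if'' direction, and a case analysis on which of $r,r_1$ lies in $C$ (with the edge $rr_1$ forced into the tree when both do) for the ``only if'' direction. Your two-case organization merely merges the paper's first and third cases, and your explicit remark that a forest on vertices of $C$ with $|C|-1$ edges is a spanning tree of $G[C]$ is a small but welcome addition.
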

\begin{proof}
First, let $C\subseteq V$ be a CVC. We distinguish three cases.
\begin{enumerate}
    \item $r\in C, r_1\not\in C$. Let $F$ be any $r$-arborescence of $D[C]$, and set $x=\chi^C$, $z=\chi^F$, $d_v$ equal to the distance between $r$ and $v$ in $F$ for $v\in C$, and $d_v=0$ for $v\not\in C$. Notice that $0\leq d_v \leq n-1$ holds for all $v\in V$. Now, $(x,z,d)$ can be checked to satisfy all constraints of $P_{\mathrm{arb}}(r,r_1)$: we only discuss Constraints \eqref{constr:arb2'}.
    Let $(u,v)\in A$. If $(u,v)\not\in F$, the corresponding constraint is $d_v\geq -n + d_u +x_v$, which is trivially satisfied for any $u,v$ as $d_v$ is non-negative and the right-hand side is non-positive. Hence, suppose $(u,v)\in F$, hence $x_v = 1$. Then the constraint is $d_v\geq d_u+1$, which is satisfied at equality by our choice of $d$. 
    \item $r_1\in C, r\not\in C$. We proceed similarly as in the previous case, choosing an $r_1$-arborescence $F$ of $D[C]$ and setting  $z=\chi^F$, $d_v$ equal to the distance between $r_1$ and $v$ in $F$ for $v\in C$, and $d_v=0$ for $v\not\in C$. Then $(x,z,d)$ can be checked to satisfy all constraints exactly as before.
    \item $r,r_1\in C$. Let $F$ be an $r$-arborescence of $D[C]$ containing the arc $rr_1$ (notice that such an arborescence always exists). Set $z=\chi^F$, and set $d$ as in the first case. Again, one checks that all constraints are satisfied.
\end{enumerate}

Now, let $(\chi^C, z,d)\in P_{\mathrm{arb}}(r,r_1)$, with $z=\chi^F$. In order to show that $G[C]$ is connected, we just need to show that $F$ does not contain any cycle. We use the same argument as in the proof of Lemma \ref{lem:arb}, which we repeat for completeness. Assume that $F$ contains a cycle $C$. $C$ cannot be a directed cycle due to Constraints \eqref{constr:arb2'}, hence $C$ contains a vertex $v$ with two incoming arcs. Constraint \eqref{constr:arb1} implies that $v=r$ or $v= r_1$, but this contradicts the fact that $\delta^-(r)=\emptyset$, $\delta^-(r_1)=\{r\}$.
\end{proof}

\section{A Branch \& Bound algorithm}\label{sec:BB}
In this section we describe a naive branch \& bound algorithm to solve the CVC problem. For simplicity we follow the standard framework of branch \& bound algorithms for the maximum stable set problem, see for instance \cite{wu2015review}: instead of looking directly for a minimum vertex cover, we look for a stable set $S^*$ of maximum size. The only difference with the classical setting is that we impose that $S^*$ is feasible, where we call \emph{feasible} a stable set $S$ such that $G\setminus S$ is connected.

We now give an informal description of the algorithm, referring to Algorithm \ref{alg:BB} for the pseudocode.
To avoid recursion, a stack is used to store the nodes explored by the algorithm. Each node consists of a pair $(S,U)$, where $S$ is a feasible stable set and $U$ is a set of candidate nodes that can be added to $S$.
The idea is to explore the search space of all possible nodes while keeping a record of the best solution found so far, denoted by $S^*$: at each step, the current node $(S,U)$ of the stack is either branched on, or pruned if we realize that it cannot produce a stable set larger than $S^*$. The pruning step is based on greedy coloring, as in the classical algorithm for the maximum stable set problem, 
 exploiting the fact that any proper coloring of the complement of a graph gives an upper bound on its maximum stable set: in particular, the maximum stable set that the node can produce has size at most $|S|+\alpha(G[U])\leq |S|+\chi(\bar{G}(U))$, and the latter term is estimated as the numbers of colors used in a greedy coloring (see Line \ref{line:while}). Branching is also performed as in the classical algorithm, but with a crucial difference: we select a vertex $v\in U$ and create nodes $(S,U\setminus \{v\})$ and $(S\cup\{v\}, U')$, where $U'\subseteq U\setminus \{v\}$ is obtained by removing from $U$ all the neighbors of $v$ \emph{and all the cut-vertices}\footnote{A vertex $v$ of a connected graph $G$ is a cut-vertex if its deletion disconnects $G$.} of $G\setminus (S\cup\{v\})$ (see Line \ref{line:branch}). This ensures that we only consider feasible stable sets.

 \begin{algorithm}[h]
	\begin{algorithmic}[1] 
        \REQUIRE{A connected graph $G=(V,E)$}

		\ENSURE{ A minimum-size CVC of $G$ }

		\STATE{$S^* \leftarrow \emptyset$}
		\STATE{$C \leftarrow $ cut-vertices of $G$}
		\STATE{$A\leftarrow [(\emptyset, V\setminus C)]$}
		
		\WHILE{$A$ non-empty}
		\STATE{$(S,U) \leftarrow$ pop($A$)}
		\WHILE{$U$ non-empty \AND $|S^*|<|S|+ $ greedy\_color($\bar{G}[U]$)}\label{line:while}
		\STATE{$v \leftarrow$ pop($U$)\label{line:pop}}
		\STATE{Append $(S,U)$ to $A$}
		\STATE{$S\leftarrow S\cup\{v\}$}
		\STATE{$C\leftarrow $ cut-vertices of $G\setminus S$}\label{line:cut}
		\STATE{$U\leftarrow (U\cap \bar{N}(v)) \setminus C$}\label{line:branch}
		\IF{$|S|>|S^*|$}
		\STATE{$S^*\leftarrow S$}
		\ENDIF
		\ENDWHILE
		\ENDWHILE
		\RETURN{$V\setminus S^*$}
	\end{algorithmic}
	\caption{Pseudocode of a basic branch \& bound algorithm for CVC. Following the classical framework for maximum stable set algorithms, the algorithm finds the largest stable set $S^*$ in $G$ such that $G\setminus S^*$ is connected, and then outputs the corresponding vertex cover.}\label{alg:BB}
\end{algorithm}

We now argue that our algorithm is correct: most importantly, we need to show that removing cut-vertices as described above is enough to find the largest feasible stable set. 

\begin{theorem}
Let $G=(V,E)$ be a connected graph. Then Algorithm \ref{alg:BB} on input $G$ outputs a minimum CVC of $G$.
\end{theorem}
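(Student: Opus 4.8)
The plan is to prove two things: \emph{soundness}, that the set $V\setminus S^*$ returned by Algorithm~\ref{alg:BB} is always a connected vertex cover, and \emph{optimality}, that it has minimum cardinality. Call a stable set $S$ \emph{feasible} if $G\setminus S$ is connected, so that connected vertex covers are exactly the complements of feasible stable sets, and minimum CVCs correspond to feasible stable sets of maximum size. The whole argument is organised around a loop invariant: every pair $(S,U)$ that is ever pushed onto the stack $A$ satisfies that $S$ is a feasible stable set, $U\subseteq V\setminus S$, no vertex of $U$ is adjacent to $S$, and no vertex of $U$ is a cut-vertex of $G\setminus S$.

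First I would isolate the \emph{key lemma}: if $S$ is a feasible stable set and $w$ is a cut-vertex of $G\setminus S$, then no feasible stable set $S_0\supseteq S$ can contain $w$. To see this, deleting $w$ splits the connected graph $G\setminus S$ into components $X_1,\dots,X_k$ with $k\ge 2$, and $w$ has a neighbour in each $X_i$; moreover $G$ has no edge joining distinct $X_i$ and $X_j$, since such an edge would survive in $G\setminus S$. If $S_0\supseteq S\cup\{w\}$ is stable and feasible, then $G\setminus S_0$ is connected, has vertex set inside $X_1\cup\dots\cup X_k$, and has no edge between different parts, so it lies inside a single $X_i$; hence $S_0$ contains all of some $X_j$ with $j\ne i$ together with $w$, and since $w$ has a neighbour in $X_j$ this contradicts stability of $S_0$. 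Applied with $S=\emptyset$ (legitimate since $G$ is connected), the lemma also shows the initial candidate set $V\setminus C$ retains every vertex of every feasible stable set.

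Next I would verify the loop invariant by induction along the execution. The only non-routine step is that when the algorithm commits $v$ (the line $S\leftarrow S\cup\{v\}$) the graph $G\setminus(S\cup\{v\})$ stays connected; this holds because $v$ was taken from $U$ and hence, by the invariant, is a non-cut-vertex of the connected graph $G\setminus S$. Soundness is then immediate: $S^*$ is always $\emptyset$ or a committed feasible stable set, so $V\setminus S^*$ is a connected vertex cover (for connected $G$ the cover $V$ itself is one, so the output is always well defined). Termination I would only note in passing: the nodes pushed while processing $(S,U)$ all have candidate set strictly smaller than $U$, so the induced search tree has finite depth and finite branching.

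For optimality, fix a feasible stable set $S_{\max}$ of maximum size and prove $|S^*|\ge |S_{\max}|$ at termination. Call a node $(S,U)$ \emph{good} if $S$ is feasible and stable, $S\subseteq S_{\max}$ and $S_{\max}\subseteq S\cup U$; the root is good by the key lemma. When a good node is processed, there are three possibilities: its candidate set is empty, forcing $S=S_{\max}$, and then $|S^*|\ge|S|=|S_{\max}|$; or the pruning test fires, in which case $|S^*|\ge |S|+\mathrm{greedy\_color}(\bar G[U])\ge |S|+\alpha(G[U])\ge |S|+|S_{\max}\setminus S| = |S_{\max}|$ (a greedy colouring of $\bar G[U]$ uses at least $\chi(\bar G[U])\ge\omega(\bar G[U])=\alpha(G[U])$ colours, and $S_{\max}\setminus S$ is a stable set inside $U$); or the algorithm branches on some $v\in U$. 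In the last case, if $v\notin S_{\max}$ then the pair $(S,U\setminus\{v\})$ pushed on $A$ is still good, and if $v\in S_{\max}$ then after the commit the current pair is $(S\cup\{v\},U')$, which one checks to be good using stability of $S_{\max}$ (to retain the non-neighbours of $v$) and the key lemma applied to the feasible set $S\cup\{v\}$ (to retain the vertices that are not newly created cut-vertices), so that $S_{\max}\setminus(S\cup\{v\})\subseteq U'$; moreover $S^*$ was just updated, so $|S^*|\ge |S\cup\{v\}|$. Each good node thus either already yields $|S^*|\ge|S_{\max}|$ or produces a strictly smaller good node, and since candidate sets cannot shrink forever we reach the former case. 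I expect the main obstacle to be precisely the bookkeeping of this last paragraph: making rigorous that goodness passes to the correct branch --- in particular that deleting the neighbours of $v$ and the new cut-vertices does not remove any vertex of $S_{\max}\setminus(S\cup\{v\})$ from $U'$ --- and correctly synchronising the inner \texttt{while} loop's pruning test with the instant $S^*$ is updated. The key lemma, although the conceptual heart, is short once the component picture is set up.
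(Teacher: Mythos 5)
Your proposal is correct and follows essentially the same route as the paper's proof: the cut-vertex lemma (no feasible stable set extending $S$ can contain a cut-vertex of $G\setminus S$) combined with tracking, along each branching, a node that still ``contains'' a maximum feasible stable set. Your version is somewhat more thorough in that it explicitly verifies soundness and justifies the greedy-colouring pruning bound, points the paper leaves implicit, but the underlying argument is the same.
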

\begin{proof}
Equivalently, we will show that the set $S^*$ output by the algorithm is the maximum feasible stable set of $G$. We say that a node $(S,U)$ \emph{contains} a feasible stable set $S'$ if $S\subseteq S'\subseteq U$. 

First, we claim that the starting node $(\emptyset, V\setminus C)$ contains all feasible stable sets, where $C$ are the cut-vertices of $G$. Indeed, if $u$ is a cut-vertex of $G$, and $S$ a feasible stable set, $S$ cannot contain $u$: if $u\in S$, we must have that $G\setminus \{u\}$ consists of two connected components $G_1$, $G_2$, and $S$ contains the vertices of $G_1$ without loss of generality. But since $G$ is connected, there is at least an edge between $u$ and a vertex of $G_1$, a contradiction.

Now, it suffices to show that, whenever we branch on a node $(S,U)$ obtaining two new nodes, any feasible stable set $S'$ contained in $(S,U)$ is contained in one of the new nodes. This implies that any feasible stable set is explored by the algorithm at some step, and concludes the proof.

The new nodes created are $(S,U\setminus \{v\})$ and $(S\cup\{v\}, U')$, where $U'$ is defined in Line \ref{line:branch}. Clearly, if $v\not\in S'$, then $S'$ is contained in node $(S,U\setminus \{v\})$ and we are done. On the other hand, if $v\in S'$, we only need to show that $S'\subseteq U'$. This follows since $S'$ cannot contain any neighbor of $v$, or any cut-vertex of $G\setminus (S\cup\{v\})$, where the latter is proved by using the same argument as for the starting node. 
\end{proof}

We conclude the section with some improvements to Algoritm \ref{alg:BB} that can be implemented to increase performance (see next Section for the implementation details).
\begin{itemize}
\item Computing a strong upper bound reduces the number of branch and bound nodes, at the price of longer running time for each node: for bipartite graphs, instead of resorting to a coloring bound we can directly compute the size of a maximum (usually unfeasible) stable set in the current subgraph, resulting in much better bounds and shorter total running time.
\item On the other hand, for general graphs we find that is better to spend less time on the upper bound computation: instead of recomputing a greedy coloring at each execution of Line \ref{line:while}, keeping the same coloring for several steps reduces the total running time.
    \item \emph{Russian Doll Search}: to slightly restrict the number of visited nodes, we order the vertices as $v_1,\dots, v_n$ by decreasing degree and call the algorithm $n$ times: at step $i$, we include node $i$ on our starting set $S$ and restrict the set $U$ to vertices $v_j$, with $j>i$, that are not neighbors of $v_i$.
\end{itemize}

\section{Numerical results}\label{sec:num}
We now compare the performance of our formulation $P_{\mathrm{arb}}$ and our branch and bound algorithm on a benchmark of random graphs. 
We remark that the CVC problem is most interesting in graphs where the solution of CVC is strictly larger than the minimum vertex cover (we call such graphs \emph{interesting}): if this is not the case one could just use the state of the art methods for finding the minimum vertex cover, and check that it induces a connected subgraph. 
This poses challenges to forming a benchmark of interesting graphs, as for instance the standard DIMACS benchmark \cite{johnson1996cliques} does not contain interesting graphs as far as we could check. 
 Hence we resorted to sparse, random graphs. In particular, half of our graphs are Erdős–Rényi random graphs with density equal to $0.05$; the others are bipartite random graphs, with density ranging from $0.1$ to $0.5$. We remark that bipartite graphs often seem to be interesting, which makes sense intuitively as each part of the bipartition forms a (possibly sub-optimal) vertex cover that is not connected: for instance, in the complete bipartite graph $K_{n,n}$, a minimum vertex cover has size $n$, while a minimum connected vertex cover has size $n+1$. Moreover, as mentioned in the introduction, bipartite graphs are one of the simplest graph classes for which the VC problem is polynomial and CVC is NP-hard, which makes them good candidates for studying the differences between the two problems. 

The graphs are produced with the functions fast\_gnp\_random\_graph() and bipartite.random\_graph() from the Networkx package \cite{hagberg2008exploring}, and the name of the graph indicates the random seed: for instance, $G_i$ is the random graph on 100 vertices with density 0.05 created by seed $i$. Some of the seeds are missing since we only consider connected graphs. 
The experiments are run on a processor Intel Core i5-4590 (4 cores) clocked at 3.3 GHz with 4 GB RAM. Algorithm \ref{alg:BB} is coded in Python, version 3.7, and Networkx functions articulation\_points() and greedy\_color() are used to perform lines \ref{line:cut} and \ref{line:while} respectively. 
We refer to \cite{aprile2022github} for the code for Algorithm \ref{alg:BB} and for producing the formulation $P_{\mathrm{arb}}$.

As for the implementation of formulation $P_{\mathrm{arb}}$, it is also done in Python 3.7 and Gurobi 9.0.3 is used as MIP solver. Default parameters are used, and the results are averaged over three runs to account for the performance variability of the solver.  

\begin{table}
\centering
 \begin{tabular}{|c c c c c c c c|}
 \hline
 Name (seed) & $|V|$, $|E|$ & VC & CVC & B\&B t & B\&B n & $P_{\mathrm{arb}}$ t & $P_{\mathrm{arb}}$ n \\
 \hline
 $G_1$ & 100, 252 & 58 & 60 & 65.6 &  23138 & 0.2 & 1 \\
\hline
$G_2$ & 100, 247 & 55 & 56 & 6.4 &  435 & 0.15 & 1 \\
\hline
$G_3$ & 100, 232 & 56 & 57 & 12.7 & 1742 & 0.17 & 1 \\
\hline
$G_4$ & 100, 238 & 58 & 59 & 17.9 & 2296 & 0.43 & 191 \\
\hline
$G_{7}$ & 100, 257 & 56 & 59 & 21.1 & 2700 &  0.3 &  14\\
\hline
$G_9$ & 100, 254 & 58 & 60 & 100.3 & 21846 & 0.18 &  1\\
\hline
$G_{13}$ & 100, 260 & 58 & 59 & 56.2 & 18766 & 0.3 & 7\\
\hline
$G_{16}$ & 100, 263 & 56 & 58 & 18.1 & 3620 & 0.22 &  1\\
\hline
$G_{24}$ & 100, 234 & 58 & 58 & 11.2 & 1788 & 0.24  &  1\\
\hline
$G_{25}$ & 100, 264 & 61 & 61 & 28.6 & 4789 & 0.54 &  158\\
\hline
\end{tabular}
\caption{\label{tab:G}Results for random graphs of low density (0.05).}
\end{table}

\begin{table}
\centering
 \begin{tabular}{|c c c c c c c c|}
 \hline
 Name (seed) & $|V|$, $|E|$ & VC & CVC & B\&B t & B\&B n & $P_{\mathrm{arb}}$ t & $P_{\mathrm{arb}}$ n\\
 \hline
 $G_{0.1} (1)$ & 100, 255 & 49 & 54 & 11.18 &  6635 & 0.16 & 1 \\
\hline
$G_{0.1} (4)$ & 100, 242 & 50 & 57 & 863.4 &  818251 & 0.23 & 1 \\
\hline
$G_{0.2} (0)$ & 100, 483 & 50 & 57 & 1h+ & 2mln+ & 2.7 & 393 \\
\hline
$G_{0.2} (1)$ & 100, 497 & 50 & 56 & 1314.9 & 999252 & 2.3 & 338 \\
\hline 
$G_{0.3} (0)$ & 100, 753 & 50 & 55 & 1137.1 &  723409 & 4.2 & 88 \\
\hline  
$G_{0.3} (1)$ & 100, 753 & 50 & 55 & 1266.4 &  874949 & 4.3 & 166 \\
\hline
 $G_{0.4} (0)$ & 100, 1007 & 50 & 54 & 354.5 &  210209 & 3 & 1 \\
\hline
$G_{0.4} (1)$ & 100, 977 & 50 & 53 & 69.6 &  39614 & 2.1 & 1 \\
\hline
$G_{0.5} (0)$ & 100, 1254 & 50 & 53  & 73.5 &  38685 & 3.9 & 1 \\
\hline
$G_{0.5} (1)$ & 100, 1231 & 50 & 53 & 50.0 &  26071 & 5.4 & 1 \\
\hline
\end{tabular}
\caption{\label{tab:Bip}Results for random bipartite graphs. The density of each graph is written in its name, with the random seed in brackets.}
\end{table}

Table \ref{tab:G} indicates the results for random graphs, and Table \ref{tab:Bip} for bipartite graphs. Columns $VC$, $CVC$ indicate the sizes of the minimum vertex cover and connected vertex cover respectively. The columns B\&B t, B\&B n indicate the running time (in seconds) and the number of nodes of Algorithm \ref{alg:BB}, and similarly for $P_{\mathrm{arb}}$ t and $P_{\mathrm{arb}}$~n. 

It is evident from this comparison that solving the CVC problem with our formulation $P_{\mathrm{arb}}$ is much faster than with Algorithm \ref{alg:BB}, by a factor of one up to three order of magnitudes for some of the instances. Algorithm \ref{alg:BB} does not finish in the time limit (one hour) for one of the bipartite graphs of density 0.2. Clearly, this might be partially due to the naive implementation of Algorithm \ref{alg:BB}, which is not optimized for speed: for instance, in line \ref{line:cut} one does not have to recompute all cut vertices every time, but could restrict the computation to a single connected component of an appropriate subgraph of $G$. However, implementing this using the appropriate functions of Networkx actually further slows down the algorithm, as more information needs to be carried by each node. Hence, obtaining a faster version of the algorithm would require more advanced data structures and tools. But we believe this would not be enough to match the speed of $P_{\mathrm{arb}}$: a major limit of the algorithm is that the bound used in the pruning phase (line \ref{line:while}) is the same as for the classical vertex cover problem, i.e. does not take connectivity into account. Finding a better bound that is specific to the CVC problem is a non-trivial challenge, that we leave as an open problem.
On the other hand, since Gurobi solves $P_{\mathrm{arb}}$ using a very small number of branching nodes, it would seem that the bound of the linear relaxation of $P_{\mathrm{arb}}$ is reasonably tight. This suggests the idea of taking the best of both worlds and integrating a bound based on $P_{\mathrm{arb}}$ into a combinatorial branch and bound algorithm.

\section{Conclusion}\label{sec:con}
The CVC problem brings together two of the most natural concepts in graph theory: stable sets and vertex covers on one hand, connectedness and spanning trees on the other. This paper approaches the problem from a modeling perspective, giving exact mixed-integer formulations for solving the problem, and compares them with a simple branch and bound algorithm. We believe that further work needs to be done in both directions: while we focused on modeling the connectivity requirement, better formulations could be found by using tighter formulations of the vertex cover problem; on the other hand, finding a faster branch and bound algorithm is fascinating challenge, as it is unclear how to tailor the branching and pruning steps to the CVC problem.
We conclude by mentioning some extensions of CVC that could be of interest.

The Tree Cover problem \cite{arkin1993approximating} is closely related to the CVC problem: given a graph with non-negative weights on the edges and numbers $k, w$ one asks to find a connected vertex cover of size at most $k$ whose induced subgraph admits a spanning tree of weight at most $w$. It is easy to see that our formulations given in Section \ref{sec:MIP} can be adapted to model the Tree Cover problem, and exploring this further is an interesting research direction.

      
A natural generalization of the CVC problem considers hypergraphs instead of graphs \cite{escoffier2010complexity}. We remark that deciding whether a hypergraph contains a spanning tree is NP-hard \cite{andersen1995np}, hinting that the hypergraph version of CVC might be significantly harder than the graph version. However, we believe that our formulations can be extended to the hypergraph setting, and intend to investigate further in the future. 


Finally, a different direction of research would be to generalize the connectivity constraint in the CVC problem to a matroid constraint, i.e. requiring that the edges of the subgraph induced by our vertex cover are full-rank sets of a given matroid. To the best of our knowledge, problems of this kind have not been studied before. Modelling such problems with mixed-integer formulations would be a promising line of inquiry, as there are several extended formulations for special matroid polytopes \cite{aprile2022regular,conforti2015subgraph,aprile2022extended}.
 \bibliographystyle{splncs04}
 \bibliography{mybib}
\end{document}